\newtheorem{theorem}{Theorem}
\newtheorem{Pro}{Proposition}
\begin{document}

\title{ Robust Radio Resource Allocation in \\MISO-SCMA Assisted C-RAN in 5G Networks
\thanks{
	M. Moltafet is with Center for Wireless Communication (CWC),
	 University of Oulu, Oulu, Finland (mohammad.moltafet@oulu.fi).
	Saeedeh.~Parsaeefard
	is with the Iran Telecommunication Research Center, Tehran, Iran (Saeede.parsaeefard@gmail.com ).
Mohammad~R.~Javan is with the Department of Electrical and Robotic Engineering,
	Shahrood  University of Technology, Shahrood, Iran (javanster@gmail.com ).
	N. Mokari is with the Department of ECE,
	Tarbiat Modares University, Tehran, Iran (nader.mokari@modares.ac.ir).
	}
\author{
	Mohammad.~Moltafet,
Saeedeh.~Parsaeefard, \IEEEmembership{Senior Member, IEEE}, Mohammad~R.~Javan,
\IEEEmembership{Member, IEEE} Nader.~Mokari, \IEEEmembership{Member,
IEEE}}}

\maketitle
\begin{abstract}
	In this paper, by considering multiple slices, a downlink transmission of a sparse code multiple access (SCMA) based cloud-radio access network (C-RAN) is investigated. In this regard, by supposing multiple input and single output (MISO) transmission technology, a novel robust radio resource allocation is proposed where considering uncertain channel state information (CSI), the worst case approach is applied. The main goal of the proposed radio resource allocation is to, maximize the system sum rate with maximum available power at radio remote head (RRH), minimum rate requirement of each slice, maximum frounthaul capacity of each RRH, user association, and SCMA constraints. To solve the proposed optimization problem in an efficient manner, an iterative method is deployed where in each iteration, beamforming and joint codebook allocation and user association subproblem are solved separately. By introducing some auxiliary variables, the joint codebook allocation and user association subproblem is transformed into an integer linear programming, and to solve the beamforming optimization problem, minorization-maximization  algorithm (MMA) is applied. Via numerical results, the performance of the proposed system model versus different system parameters and for different channel models are investigated.
	\emph{Index Terms--} 5G, Sparse code multiple access (SCMA), C-RAN, robust resource allocation.
\end{abstract}	
	
\section{Introduction}
{Due to the explosive growth of advent high data rate wireless services and number of users, there are various challenges in the design and management of the cellular networks such as demands for higher spectrum efficiency and massive connectivity while keeping the cost in the most acceptable level for both providers and customers. In this regard, both academic and industrial researchers are working to develop the next generation of cellular networks  e.g., fifth generation wireless networks (5G).}
 
{One of the basic concepts and key enablers to design 5G, in the most efficient and cost effective manner, is to resort to the concept of software defined networking and virtualization where 5G can be sliced between different services with divers quality of service (QoS) requirements \cite{8444568}. In this so called networking slicing approach, providing isolation between slices in order to hold the QoS of each services under any users' variations of other users is of essential which calls for highly efficient resource management in this context \cite{Saeedeh2}.}

{On the other side, to meet the spectrum efficiency demands of data hungry services in 5G, new transmission techniques are under-investigated  where non-orthogonal multiple access (NOMA) techniques are one of the promising approaches here \cite{7973146}.} On of the interesting NOMA techniques to address the mentioned challenges is sparse code multiple access (SCMA) which can enhance the system performance compared to the other access techniques \cite{7037423,8061028}. 
By applying the SCMA, each subcarrier can be reused more than one time in the coverage 	
area of radio remote head (RRH). In the SCMA technique at the transmitter side the signal of various users are sent on a subcarrier  with the  SCMA encoder, and at the receiver side, the signal of each user is detected by exploiting the message passing algorithm (MPA) \cite{6666156}. 

 { 
 The SCMA technique is an appropriate access technique for downlink that can increase the overall  
 throughput through user multiplexing and is well-matched to heavily loaded 5G networks  \cite{7037423,7926318,8108679,7880945,7341100,8334819,8108371,8370025,7339668}. Notably, the main practical challenge with using the SCMA in the downlink transmission is the complexity of the multiuser detection on the receiver side. In other words, the detection complexity is too high for the mobile terminals. To tackle this issue, low complexity methods for detection are needed. Therefore, the trend of researches is to achieve methods that decrease the complexity of the SCMA receiver  \cite{8316261,7841967,7881102,7510754}.}
 
Next generation wireless network should be able  to manage the high density networks with massive connectivity and various required QoS. In order to deal with these requirements and manage system efficiently, hierarchical software defined cloud-radio access network (C-RAN) with centralized management can be an appropriate choice. { The C-RAN architecture contains several  radio remote heads (RRHs) that are responsible for  the transmission and reception processes, one baseband unit (BBU) center that  is responsible for all the processing and management tasks, and frounthaul links which connect the RRHs to the  BBU center.}

In practice, due to the various practical reasons, such as estimation errors, feedback quantization, and hardware limitations, uncertain CSI is available at transmitter. To deal with the critical applications, where user outage is interpolated as system failure \cite{8119799}, robust optimization approaches play a pivotal role to provide the necessary reliability for the system, which is the main focus of this paper.
  
\subsection{Related Works}

This work can be positioned in the intersection of three categories of resource allocation problems, namely, 1) Resource allocation in SCMA based system  2) Resource allocation in C-RANs 3) Robust resource allocation. While each of these categories has a specific body of works, in the followings, we present the most related works.

\subsubsection{Resource Allocation in SCMA Based System}
{Notably, very few works have investigated the radio resource allocation in the SCMA based systems. Most of the existing works study the SCMA structure and  link level performance \cite{7037423,7341306,7841967,7341100}. \cite{7037563} evaluates the energy efficiency of SCMA method via MATLAB software. In \cite{7339668}, a power and codebook assignment method is studied where the joint sum rate and system energy efficiency is maximized. Moreover, the tradeoff between rate and energy in a SCMA based system is addressed. In \cite{7565151}, the authors investigate resource allocation in a device to device communication based cellular network with SCMA access technique. They propose a maximization optimization problem with minimum signal to interference plus noise ratio (SINR) requirement for both device-to device and cellular users. In addition, to solve the proposed optimization problem they exploit a Hypergraph based method. \cite{8061028} proposes a joint power and codebook allocation method to maximize the total sum rate in a SCMA based heterogeneous cellular network (HetNet) where the power domain non-orthogonal multiple access (PD-NOMA) is compared with SCMA from the complexity and total sum rate aspects. In addition, to solve the proposed optimization problem, successive convex approximation (SCA) for low complexity (SCALE) and arithmetic geometric mean approximation (AGMA) methods are deployed.}
\subsubsection{ {Resource Allocation in C-RAN}}
In C-RAN networks multiple centralized radio resource allocation  algorithms are proposed based on the received global information of the networks. In \cite{7840188}, the authors, considering orthogonal frequency division multiple access (OFDMA) based multicell system, propose a centralized radio resource allocation algorithm which maximizes the system sum rate. Moreover, they study the coordinated multi point (CoMP) technique in the considered system model.
   The authors of \cite{7406764} investigate an information-centric network architecture for  device-to-device communications, in which with the software defined radio (SDN) controller, a dynamic centralized radio resource allocation is applied.
      In  \cite{7410051},  the authors develop an algorithm to find both the user association and bandwidth allocation in an  SD-based virtualized information-centric architecture network.
      In \cite{7586708}, considering the effective capacity, an optimal power allocation method is proposed in which the effect of the delay-QoS on power allocation and the gain from content caching is evaluated.

\subsubsection{Robust Resource Allocation} 
Due to dynamic and highly variable nature of wireless channels and users' behavior, considering the error in the system information and specially CSI of resource allocation in wireless networks has been drawn a lot of attention e.g., \cite{parsa010} where different approaches are proposed in this context. Since in 5G, the isolation constraints of each slice is of essential, in this paper, we resort to worst case robust optimization theory where based on the error bound of the CSI minimum value of system sum rate is maximized.  
%
\cite{7442902} investigates the worst case robust beamforming in a PD-NOMA based system. In
\cite{8371013}, considering two types of users as elastic and streaming, sum rate maximization under the worst case uncertainty in a NOMA based system is investigated. 
In \cite{7513455}, supposing that only average
CSI is available at the BS, rate maximization under outage
constraints in a NOMA based system is studied.
\cite{8292371} studies sensitivity analysis of a PD-NOMA based virtualized wireless network  to imperfect successive interference cancellation (SIC). 
\cite{8119799} investigates resource allocation problem for uplink PD-NOMA based networks where  the impact on user and system performance due to errors resulting from imperfect SIC is examined, and the chance constrained robust optimization method to cope with this type of error is deployed.

{As can be concluded non of the aforementioned works investigates the robust radio resource allocation in an SCMA assisted C-RAN considering the multiple input single output (MISO).}

\subsection{Contribution}
The main aim of this paper is to propose a robust radio resource allocation method in a MISO-SCMA assisted C-RAN in the presence of multiple slices. In this regard, using the worst case approach, the minimum value of system sum rate with minimum rate requirement of each slice, capacity limitation of each RRH and maximum transmit power constraints is maximized.

The proposed robust optimization problem is a non-convex problem with mixed integer and continues variables, and therefore, to solve it, the common methods for solving the convex optimization problems can not be used directly. Hence, to solve the proposed resource allocation problem, we introduce the worst case optimization problem of our setup. Afterwards, to deal with the nonlinearity feature of joint codebook allocation and user association subproblem we exploit some auxiliary variables and reformulate the proposed  optimization problem. Finally, we deploy  the SCA based alternate search method (ASM) \cite{4752799,5771610}. Based on the proposed solution, at first by adding some auxiliary variables  the main problem is transformed into a new form where the joint codebook allocation and user association subproblem is in an integer linear form. Therefore, in each iteration, to solve the joint codebook allocation and user association subproblem, an integer linear optimization problem is solved, and for beamforming subproblem a convex optimization problem is solved.

 Our contributions are summarized as follows: 
\begin{itemize}
	\item 
	 {We consider a MISO-SCMA assisted C-RAN in the presence of multiple slices 
	where the minimum rate requirement of each slice and maximum frounthaul capacity of each RRH should be satisfied.  The considered system model is a good match for the next generation wireless networks in which various services with different QoS requirements  have to be served. }
	\item
 {We propose a novel robust radio resource allocation problem in which by determining the power allocation and joint codebook allocation and user association  methods, the system sum rate is maximized. By exploiting this radio resource allocation policy, we maximize the system spectral efficiency while we consider the  reliability for the system with respect to the uncertain CSI.}
	\item 
	We define one parameter for both codebook assignment and user association, which decreases the dimension of the  proposed optimization problem and also decreases the complexity of its solution.
	\item 
	In order to solve the proposed robust optimization problem, at first we utilize the worst case approach, then we develop an ASM base solution with low complexity.	
\end{itemize}

The rest of this paper is organized as follows. In Section \ref{systemmodelandproblemformulation}, system model and problem formulation are presented. In Section \ref{centralized solution algorithm}, the solution algorithm is developed. In Section \ref{numerical results} numerical results are investigated. Finally, in Section \ref{conclusion},  conclusions of the paper are presented.
\section{system model and problem formulation}\label{systemmodelandproblemformulation}

\subsection{System Model}\label{system model}

We consider a scenario with multiple  slices in which  the users of each slice spreading over the total coverage area of a C-RAN. 
  We assume that all the transmitters are equipped with multiple antennas while the receivers are simply single antenna systems. We denote the set of slices by $v\in\mathcal{V}=\{1,\cdots,V\}$, and the set of RRHs  by $b\in\mathcal{B}=\{1,\cdots,B\}$ where $b=1$ shows high power RRH and $b\in\{2,\cdots,B\}$ illustrates the low power RRH. We assume that the set of all users in the network is denoted by $\mathcal{K}=\{1,\cdots,K\}$ which is the union of the set of users of all the slices, i.e., $\mathcal{K}=\cup_{v\in\mathcal{V}}\mathcal{K}_v$ where $\mathcal{K}_v=\{1,\dots,K_v\}$. A typical illustration of the considered system model is illustrated in Fig. \ref{Sysmodel}. We assume that the total bandwidth of the network  is divided into $N$ subcarriers whose bandwidth is less than the coherence bandwidth of the wireless channel. We also denote the channel gain from RRH $b$ to user $k$ over the subcarrier $n$ by $\textbf{h}_{b,n,k}\in \mathcal{R}_c^{M_\text{T}\times 1}$ where $\mathcal{R}_c$ is the complex field, and the beam vector assigned by transmitter $b$ to  user $k$ over subcarrier $n$ by  $\textbf{w}_{b,n,k}\in \mathcal{CW}^{M_\text{T}\times 1}$.
We define an indicator variable $\rho_{b,c,k}\in\{0,1\}$ with $\rho_{b,c,k}=1$ if user $k$ is scheduled to receive information from RRH $b$ over codebook $c$, and $\rho_{b,c,k}=0$ if it is not scheduled to receive from transmitter $b$ over codebook $c$. In addition we define $q_{n,c}\in\{0,1\}$ with $q_{n,c}=1$ if subcarrier $ n $ assigned to codebook $ c  $ and otherwise $q_{n,c}=0$.
 Assume that the information symbol  ${s}_{b,n,k}$ is decided to be transmitted to user $k$ from RRH $b$ over subcarrier $n$. It should be noted that   $s_{b,n,k}$ is the  symbol  after SCMA encoding.
Moreover, for the sake of simplicity,   we consider $\boldsymbol{W}=[\textbf{w}_{b,n,k}], \,\, b\in\mathcal{B}, n\in\mathcal{N},k\in\mathcal{K}$.  

\begin{figure}
	\centering
	\includegraphics[width=.5\textwidth]{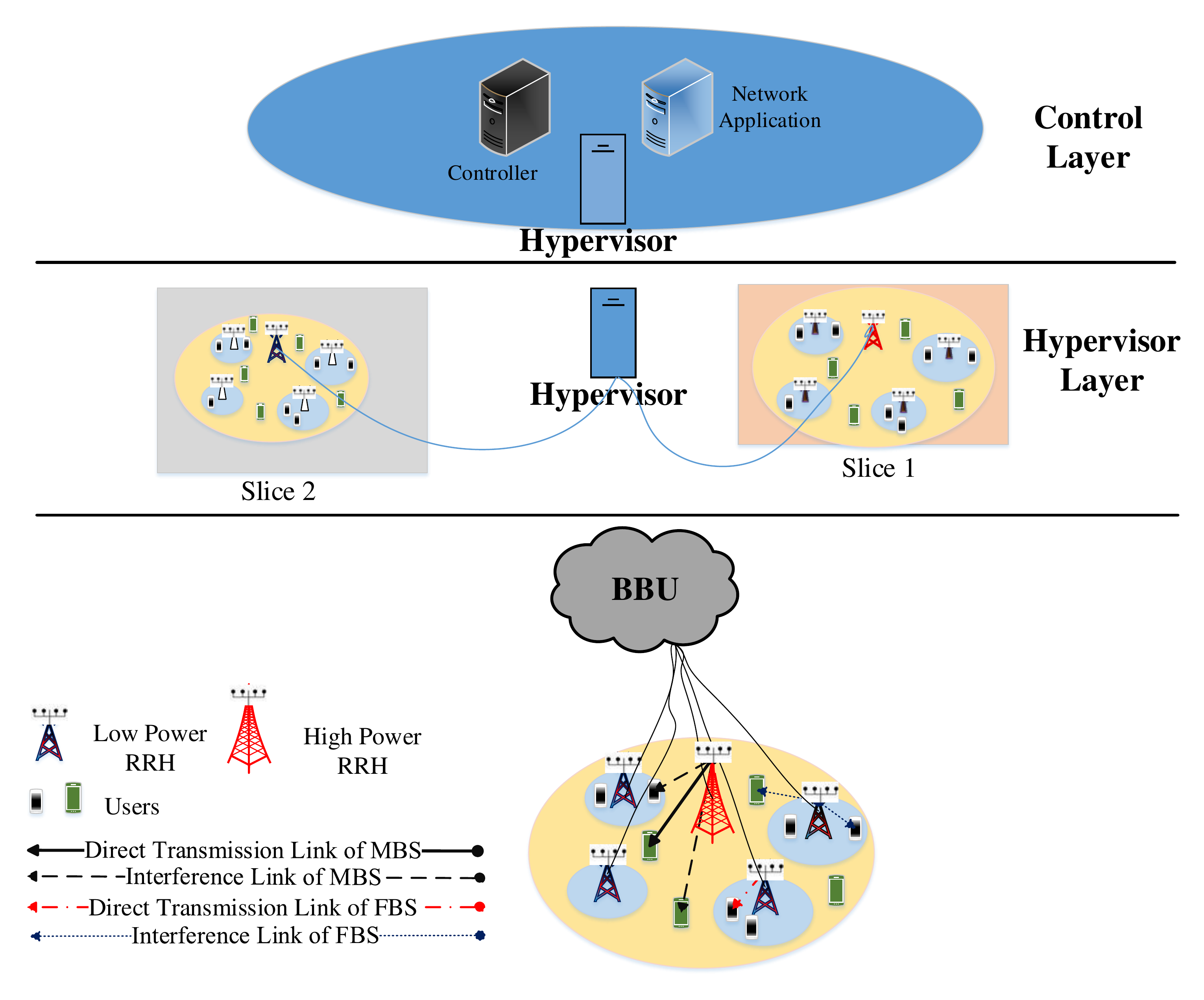}
	\caption{Illustration of the system model containing two slices in a coverage area of one C-RAN with four low power RRHs, one high power RRH, one BBU center, where the fronthauls is the links among RRHs and BBU center.}
	\label{Sysmodel}
\end{figure}
 
In this setup, for the sake of simplicity, we suppose that the frounthaul links (the Links among RRHs and cloud center) are not wireless based link. Therefore, there are not any limitation for frounthaul links. In the SCMA based systems the N-dimensional codewords of a codebook are sparse vectors with $U (U<N)$ non-zero entries.


 {Generally, the detection performance of the SCMA based systems with MPA has the following two properties \cite{748992}: 
 	I) with increasing the complexity of the factor graph, the detection performance is degraded and the complexity of the detection process increases; II) if the factor graph does not have any  loop, the MPA technique provides an exact solution for signal detection. When the number of the multiplexed signals for each subcarrier increases, the factor graph
 	becomes more complex and the probability of existing loops in the factor graph increases. Therefore, the receiver cannot successfully decode the mixed signals.
 	However, under the constraint of the maximum reused degree (i.e., $K_T$) which is presented in \eqref{bnhfg01}, and choosing appropriate $K_T$, different codebooks can be regarded as orthogonal resources approximatively  \cite{7876870,8074731,7848890,7636728,7880945,7339668,8061028,7037563}}

 \begin{align}\label{bnhfg01}
\sum_{b\in\mathcal{B}}\sum_{k\in\mathcal{K}}\sum_{c\in\mathcal{C}}\sum_{n\in\mathcal{N}}  \rho_{b,c,k} q_{n,c}\leq K_\text{T}, \forall n\in\mathcal{N}.
\end{align}
In other words, \eqref{bnhfg01} determines that at most $K_\text{T}$ users could be scheduled on one subcarrier at the same time {due to the above mentioned practical issues}.

{The received signal of user $k$ assigned to RRH $b$ on codebook $c$ is} 
\begin{align}\nonumber
\textbf{y}_{b,c,k}=&\sum_{n\in\mathcal{N}}  \rho_{b,c,k}s_{b,n,k}q_{n,c}\textbf{h}^T_{b,n,k}\textbf{w}_{b,n,k}+\\ \nonumber &\sum_{k'\in\mathcal{K}/k}\sum_{n\in\mathcal{N}} \rho_{b,c,k'}s_{b,n,k'} q_{n,c}\textbf{h}^T_{b,n,k}\textbf{w}_{b,n,k'}+\\ \nonumber &
\sum_{b'\in\mathcal{B}/b}\sum_{k'\in\mathcal{K}}\sum_{n\in\mathcal{N}} \rho_{b',c,k'}s_{b',n,k'} q_{n,c}\textbf{h}^T_{b',n,k}\textbf{w}_{b',n,k'}.
\end{align}

The SINR of user $k$ from transmitter $b$ in codebook $c$ is 
\begin{align}\nonumber
\gamma_{b,c,k}=\frac{\sum_{n\in\mathcal{N}}  q_{n,c}\rho_{b,c,k_{v}}\mid\textbf{w}^T_{b,n,k}\textbf{h}_{b,n,k}\mid^2}{I^{\text{inter}}_{b,c,k}+I^{\text{intra}}_{b,c,k}+ \sigma},
\end{align}
where $\sigma$ is the noise power,
$$I^{\text{intra}}_{b,c,k}=\sum_{k'\in\mathcal{K}/k}\sum_{n\in\mathcal{N}} \rho_{b,c,k'} q_{n,c}\mid\textbf{w}^T_{b,n,k'}\textbf{h}_{b,n,k}\mid^2,$$
and,
\begin{align}
&I^{\text{inter}}_{b,c,k}=\\&\nonumber\sum_{b'\in\mathcal{B}/b}\sum_{k'\in\mathcal{K}}\sum_{n\in\mathcal{N}} \rho_{b',c,k'} q_{n,c}\mid\textbf{w}^T_{b',n,k'}\textbf{h}_{b',n,k}\mid^2.
\end{align}
%

Consequentially, the rate function is given by
\begin{align} 
r_{b,c,k}=\log(1+\gamma_{b,c,k}).
\end{align}

It should be noted that the Shannon's formula gives an upper bound for the rate function in a SCMA based systems \cite{5910654}.

\subsection{Nominal Resource Allocation Problem}
The optimization problem formulation to find the best beamforming, codebook, and user association   approach is formulated as follows:
\begin{subequations}\label{1.5.a}
\begin{align}\label{eeq8a}
&\max_{\mathbf{W},\boldsymbol{\rho}}\; \sum_{b\in\mathcal{B}}\sum_{c\in\mathcal{C}}\sum_{k\in\mathcal{K}} r_{b,c,k}\\& \label{1.5.ab}
 \text{s.t.}:\hspace{.25cm}
 \sum_{c\in\mathcal{C}}\sum_{k\in\mathcal{K}}\sum_{n\in\mathcal{N}} q_{n,c}\rho_{b,c,k}\|\textbf{w}_{b,n,k}\|^2\le P^{b}_{\text{max}}, \\&\nonumber\hspace{1cm} \forall b\in\mathcal{B}, \\&
  \label{1.5.b}
 	\hspace{1cm}\sum_{c\in\mathcal{C}}\sum_{k\in\mathcal{K}}\sum_{n\in\mathcal{N}}  r_{b,c,k}\le R_B^b,\,\,\,\forall b\in \mathcal{B},\\&
 	\label{1.5.c}
\hspace{1cm}\sum_{b\in\mathcal{B}}\sum_{c\in\mathcal{C}}\sum_{k\in\mathcal{K}_v} r_{b,c,k}\ge R^v_{\text{min} },\,\,\,\forall v\in \mathcal{V},\\&\label{1.5.d}
\hspace{1cm}\sum_{b\in\mathcal{B}}\sum_{k\in\mathcal{K}}\sum_{c\in\mathcal{C}}  \rho_{b,c,k} q_{n,c}\leq K_\text{T}, \forall n\in\mathcal{N},
\\\label{constrer2}&
 \hspace{1cm}\rho_{b,c,k}+\rho_{b',c',k}\le 1,\forall b \neq b', b,b'\in \mathcal{B}, c,c' \in \mathcal{C}\\&\label{1.5.f}
\hspace{1cm}\rho_{b,c,k}\in
\begin{Bmatrix}
 0 ,
1
\end{Bmatrix},\,\, b\in\mathcal{B}, c\in\mathcal{C},k\in\mathcal{K}, 
\end{align}
\end{subequations}
where $R_B^b$ is the maximum frounthaul capacity of RRH $b$, \eqref{1.5.ab} indicates the total available
power at each RRH, \eqref{1.5.b} shows the frounthaul capacity limitation for each RRH,
  \eqref{1.5.c} demonstrates the
minimum rate requirement for each slice,  \eqref{1.5.d}  shows the SCMA  constraint  and 
 \eqref{constrer2} shows that each user can be assigned to at most one RRH, simultaneously.
 
 \subsection{Worst Case Resource Allocation Problem }
  In order to 
 model the uncertainty of CSI, the additive error model for  channel model is considered which is given by \cite{6129540}
\begin{align}
\textbf{h}_{b,n,k}=\bar{\textbf{h}}_{b,n,k}+\textbf{e}_{b,n,k},
\end{align}
 where $ \bar{\textbf{h}}_{b,n,k} $   indicates the estimated imperfect channel coefficients and  $ \textbf{e}_{b,n,k} $ denotes the error vector.
  
  Here, we assume uncertain  CSI case in which $ \textbf{e}_{b,n,k} $ is a norm bounded vector for analytical convenience. In
  this regard, by supposing the Euclidean ball-shaped uncertainty the  channel uncertainty  sets are defined as follows:
\begin{align}
&\mathcal{H}_{b,n,k}=\{\textbf{h}_{b,n,k}: \textbf{h}_{b,n,k}=\bar{\textbf{h}}_{b,n,k}+\textbf{e}_{b,n,k}, \|\textbf{e}_{b,n,k}\|\le \kappa\}\\&\nonumber\forall  b\in\mathcal{B}, n\in\mathcal{N}, k\in\mathcal{K}.
\end{align} 
 where $\kappa$  is the error bounds on the uncertainty
 region of the channel coefficient $ \bar{\textbf{h}}_{b,n,k} $.
 
 In the worst-case
 approach,  the considered value of channel coefficients  have to belong to  the related  uncertainty sets, and also  they have to provides the minimum rate on each codebook for the assigned  considered user. 
 Here, the worst-case approach is transformed into the protection function which simplifies the robust problem considerably \cite{8371013,7324447}.
 
  Therefore, instead of  $ \mid\textbf{w}^T_{b,n,k}\textbf{h}_{b,n,k}\mid^2 $,
 we can rewrite
 \begin{align}\label{worstchannel}
&\mid\textbf{w}^T_{b,n,k}\textbf{h}_{b,n,k}\mid^2=\\&\nonumber\textbf{w}^T_{b,n,k}(\bar{\textbf{h}}_{b,n,k}+\textbf{e}_{b,n,k})(\bar{\textbf{h}}_{b,n,k}+\textbf{e}_{b,n,k})^T \textbf{w}_{b,n,k}=\\&\nonumber\textbf{w}^T_{b,n,k}(\bar{\textbf{h}}_{b,n,k}\bar{\textbf{h}}^T_{b,n,k}+\bold{\Delta}_{b,n,k})\textbf{w}_{b,n,k}=\\&\nonumber \textbf{w}^T_{b,n,k}\bar{\textbf{h}}_{b,n,k}\bar{\textbf{h}}^T_{b,n,k}\textbf{w}_{b,n,k} +\textbf{w}^T_{b,n,k} \bold{\Delta}_{b,n,k}\textbf{w}_{b,n,k}
=\\&\nonumber \mid\textbf{w}^T_{b,n,k}\bar{\textbf{h}}_{b,n,k}\mid^2 +\textbf{w}^T_{b,n,k} \bold{\Delta}_{b,n,k}\textbf{w}_{b,n,k},
 \end{align}
  where 
\begin{align}
  \bold{\Delta}_{b,n,k}=\bar{\textbf{h}}_{b,n,k}\textbf{e}^T_{b,n,k}+ \textbf{e}_{b,n,k}\bar{\textbf{h}}^T_{b,n,k}+\textbf{e}_{b,n,k}\textbf{e}^T_{b,n,k}.
\end{align}
$\bold{\Delta}_{b,n,k} $ is defined as a norm-bounded matrix, i.e., $\| \bold{\Delta}_{b,n,k}\|\le\ominus_{b,n,k}$ where $ \ominus_{b,n,k} $ is determined by  Proposition \ref{pro1}.
\begin{Pro}\label{pro1}
With considering $\|\textbf{e}_{b,n,k}\|\le \kappa$, bound $\ominus_{b,n,k}$ is equal to $\kappa^2+2\kappa\|\bar{\textbf{h}}_{b,n,k}\|$.
\end{Pro}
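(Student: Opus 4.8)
The plan is to bound $\|\bold{\Delta}_{b,n,k}\|$ term by term using the triangle inequality and submultiplicativity of the chosen matrix norm. Writing $\bold{\Delta}_{b,n,k}=\bar{\textbf{h}}_{b,n,k}\textbf{e}^T_{b,n,k}+ \textbf{e}_{b,n,k}\bar{\textbf{h}}^T_{b,n,k}+\textbf{e}_{b,n,k}\textbf{e}^T_{b,n,k}$, the triangle inequality gives $\|\bold{\Delta}_{b,n,k}\|\le \|\bar{\textbf{h}}_{b,n,k}\textbf{e}^T_{b,n,k}\| + \|\textbf{e}_{b,n,k}\bar{\textbf{h}}^T_{b,n,k}\| + \|\textbf{e}_{b,n,k}\textbf{e}^T_{b,n,k}\|$.

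Next I would use the fact that for a rank-one outer product $\textbf{a}\textbf{b}^T$, the spectral norm (and also the Frobenius norm, since the product is rank one) equals $\|\textbf{a}\|\,\|\textbf{b}\|$. Applying this to each of the three summands yields $\|\bar{\textbf{h}}_{b,n,k}\textbf{e}^T_{b,n,k}\| = \|\bar{\textbf{h}}_{b,n,k}\|\,\|\textbf{e}_{b,n,k}\|$, likewise $\|\textbf{e}_{b,n,k}\bar{\textbf{h}}^T_{b,n,k}\| = \|\bar{\textbf{h}}_{b,n,k}\|\,\|\textbf{e}_{b,n,k}\|$, and $\|\textbf{e}_{b,n,k}\textbf{e}^T_{b,n,k}\| = \|\textbf{e}_{b,n,k}\|^2$. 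Summing gives $\|\bold{\Delta}_{b,n,k}\|\le 2\|\bar{\textbf{h}}_{b,n,k}\|\,\|\textbf{e}_{b,n,k}\| + \|\textbf{e}_{b,n,k}\|^2$.

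Finally I would invoke the hypothesis $\|\textbf{e}_{b,n,k}\|\le\kappa$. Since the right-hand side $2\|\bar{\textbf{h}}_{b,n,k}\|\,t + t^2$ is monotonically increasing in $t\ge 0$, substituting $t=\kappa$ gives the worst-case bound $\|\bold{\Delta}_{b,n,k}\|\le \kappa^2 + 2\kappa\|\bar{\textbf{h}}_{b,n,k}\|$, so we may take $\ominus_{b,n,k}=\kappa^2+2\kappa\|\bar{\textbf{h}}_{b,n,k}\|$, which is exactly the claimed expression.

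The argument is essentially mechanical, so there is no deep obstacle; the only point requiring a little care is the choice of matrix norm and the justification that it is submultiplicative and behaves as stated on rank-one matrices. The paper does not specify which norm $\|\cdot\|$ denotes on matrices, but the bound holds verbatim for the spectral norm (where $\|\textbf{a}\textbf{b}^T\|_2=\|\textbf{a}\|_2\|\textbf{b}\|_2$) and for the Frobenius norm (where $\|\textbf{a}\textbf{b}^T\|_F=\|\textbf{a}\|_2\|\textbf{b}\|_2$ as well), so I would simply state the norm being used and note that both satisfy the required properties before running the triangle-inequality chain above.
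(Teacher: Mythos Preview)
Your proposal is correct and follows exactly the same approach as the paper: apply the triangle inequality to the three summands of $\bold{\Delta}_{b,n,k}$ and then bound each rank-one term using $\|\textbf{e}_{b,n,k}\|\le\kappa$. The paper's own proof is in fact terser than yours, jumping directly from the triangle inequality to the final bound without explicitly justifying the norm of the rank-one outer products or the monotonicity step, so your version is a slightly more careful rendering of the same argument.
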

\begin{proof}  Based on the triangle inequality, we have \cite{Boydconvex}
 \begin{align}\label{norm-bounded}
 &\|\bold{\Delta}_{b,n,k}\|=\|\bar{\textbf{h}}_{b,n,k}\textbf{e}^T_{b,n,k}+ \textbf{e}_{b,n,k}\bar{\textbf{h}}^T_{b,n,k}+\textbf{e}_{b,n,k}\textbf{e}^T_{b,n,k}\|\le\\&\nonumber\|\bar{\textbf{h}}_{b,n,k}\textbf{e}^T_{b,n,k}\|+\|\textbf{e}_{b,n,k}\bar{\textbf{h}}^T_{b,n,k}\|+\|\textbf{e}_{b,n,k}\textbf{e}^T_{b,n,k}\|=\\&\nonumber
 \kappa^2+2\kappa\|\bar{\textbf{h}}_{b,n,k}\| \Rightarrow \ominus_{b,n,k}= \kappa^2+2\kappa\|\bar{\textbf{h}}_{b,n,k}\|.
 \end{align}
 	\end{proof}
Based on the worst case approach, considering the error bound of the CSI, the rate functions for the objective function and constrain \eqref{1.5.c} should give the minimum achievable rate.
Consequently, from \eqref{worstchannel} and \eqref{norm-bounded}, the worst-case SINR of user $k$ from transmitter $b$ in codebook $c$ is defined as follows 
  \begin{align}\nonumber
  &\hat{\gamma}_{b,c,k}=\\&\nonumber\frac{\sum_{n\in\mathcal{N}}  q_{n,c}\rho_{b,c,k_{v}}\big(\mid\textbf{w}^T_{b,n,k}\bar{\textbf{h}}_{b,n,k}\mid^2 -\ominus_{b,n,k} \|\textbf{w}_{b,n,k}\|^2\big)}{\hat{I}^{\text{inter}}_{b,c,k}+\hat{I}^{\text{intra}}_{b,c,k}+ \sigma},
  \end{align}
where $\sigma$ is the noise power, 
  \begin{align}\nonumber
  &\hat{I}^{\text{intra}}_{b,c,k}=\sum_{k'\in\mathcal{K}/k}\sum_{n\in\mathcal{N}}\\&\nonumber \rho_{b,c,k'} q_{n,c}\big(\mid\textbf{w}^T_{b,n,k'}\bar{\textbf{h}}_{b,n,k}\mid^2 +\ominus_{b,n,k} \|\textbf{w}_{b,n,k'}\|^2\big),
  \end{align}
  and,
  \begin{align}\nonumber
  &\hat{I}^{\text{inter}}_{b,c,k}=\sum_{b'\in\mathcal{B}/b}\sum_{k'\in\mathcal{K}}\sum_{n\in\mathcal{N}}\\&\nonumber \rho_{b',c,k'} q_{n,c}\big(\mid\textbf{w}^T_{b',n,k'}\bar{\textbf{h}}_{b',n,k}\mid^2 +\ominus_{b,n,k} \|\textbf{w}_{b',n,k'}\|^2\big).
  \end{align}
Consequently, the rate function for  user $k$ from transmitter $b$ in codebook $c$ in the  objective function and constraint \eqref{1.5.c} is given by 
  $\hat{r}_{b,c,k}=\log(1+\hat{\gamma}_{b,c,k})$. 
$\hat{r}_{b,c,k}$ gives the minimum achievable rate, since based on the error bound of the CSI, the numerator of the SINR has the  minimum value and the denominator has the maximum value.
  
  Based on the worst case approach, considering the error bound of the CSI, the rate functions for constraint  \eqref{1.5.b} should give the maximum achievable rate. Therefore, from \eqref{worstchannel} and \eqref{norm-bounded}, SINR and rate functions in constraint \eqref{1.5.b}  are given as follows:
   \begin{align}\nonumber
  &\overline{\gamma}_{b,c,k}=\\&\nonumber\frac{\sum_{n\in\mathcal{N}}  q_{n,c}\rho_{b,c,k_{v}}\big(\mid\textbf{w}^T_{b,n,k}\bar{\textbf{h}}_{b,n,k}\mid^2 +\ominus_{b,n,k} \|\textbf{w}_{b,n,k}\|^2\big)}{\overline{I}^{\text{inter}}_{b,c,k}+\overline{I}^{\text{intra}}_{b,c,k}+ \sigma},
  \end{align}
  where 
  \begin{align}\nonumber
  &\overline{I}^{\text{intra}}_{b,c,k}=\\&\nonumber\sum_{k'\in\mathcal{K}/k}\sum_{n\in\mathcal{N}} \rho_{b,c,k'} q_{n,c}\big(\mid\textbf{w}^T_{b,n,k'}\bar{\textbf{h}}_{b,n,k}\mid^2 -\ominus_{b,n,k} \|\textbf{w}_{b,n,k'}\|^2\big),
  \end{align}
and,
  \begin{align}\nonumber
  &\overline{I}^{\text{inter}}_{b,c,k}=\sum_{b'\in\mathcal{B}/b}\sum_{k'\in\mathcal{K}}\sum_{n\in\mathcal{N}}\\&\nonumber \rho_{b',c,k'} q_{n,c}\big(\mid\textbf{w}^T_{b',n,k'}\bar{\textbf{h}}_{b',n,k}\mid^2 -\ominus_{b,n,k} \|\textbf{w}_{b',n,k'}\|^2\big).
  \end{align}
Consequently, the rate function for  user $k$ from transmitter $b$ in codebook $c$ in constraint \eqref{1.5.b} is  
  $$\overline{r}_{b,c,k}=\log(1+\overline{\gamma}_{b,c,k}).$$ 
 Finally, the worst case optimization problem can be represented as  
  \begin{subequations}\label{W1.5.a}
  	\begin{align}\label{Weeq8a}
  	&\max_{\mathbf{W},\boldsymbol{\rho}}\; \sum_{b\in\mathcal{B}}\sum_{c\in\mathcal{C}}\sum_{k\in\mathcal{K}} \hat{r}_{b,c,k}\\& \nonumber
  	\text{s.t.}:\hspace{.25cm}
  	 \eqref{1.5.ab},  \eqref{1.5.d}, \eqref{constrer2},\eqref{1.5.f}, \\&
  	\label{W1.5.c}
  	\hspace{1cm}\sum_{b\in\mathcal{B}}\sum_{c\in\mathcal{C}}\sum_{k\in\mathcal{K}_v} \hat{r}_{b,c,k}\ge R^v_{\text{min} },\,\,\,\forall v\in \mathcal{V},\\&
  	\label{W1.5.b}
  	\hspace{1cm}\sum_{c\in\mathcal{C}}\sum_{k\in\mathcal{K}}\sum_{n\in\mathcal{N}}  \overline{r}_{b,c,k}\le R_B^b,\,\,\,\forall b\in \mathcal{B}.
  	\end{align}
  \end{subequations}

\section{Proposed Two-Step Iterative Algorithm}\label{centralized solution algorithm}
The proposed nominal optimization problem in \eqref{1.5.a} is a special case of the worst case optimization problem \eqref{W1.5.a} $\kappa=0$, e.g., For $\kappa=0$, we have $\ominus_{b,n,k}=0, \forall b, n, k$ then, \eqref{W1.5.a} is transformed into \eqref{1.5.a}. Consequently, the solution algorithm of the worst-case optimization problem \eqref{W1.5.a} is investigated.

{Since \eqref{W1.5.a} is a non-convex optimization problem containing mixed integer and continues variables, belongs to the NP-hard optimization problem category \cite{4453890,6354282,6415394}. Therefore, finding its optimal solution is not trivial and conventional methods for solving the convex optimization problems cannot be used. Therefore, proposing an efficient algorithm with affordable computational complexity is of essential for this case. Consequently, we resort to the ASM method which is a well-known and efficient algorithm to solve this type of problems and converges to a suboptimal solution and is extensively used in the literature, e.g., \cite{6678362,hooke1961direct}.}

 Based on the ASM approach, an optimization problem is decomposed into several subproblems. The number of subproblems is determined based on the subset of variables. These subproblems are iteratively solved until the convergence is achieved.
{Therefore, for our setup, we consider that} in each iteration, the optimization problem \eqref{W1.5.a} is decoupled into two subproblems referred to the codebook allocation and beamforming subproblems. To solve the beamforming subproblem, first we approximate it by a convex optimization problem and then, we use the interior point method (IPM) to solve it. The main steps of the proposed method is shown in Algorithm \ref{algo1}.
 
\begin{algorithm}
 	\caption{\textit{Two-Step Iterative Algorithm} }
 	\label{algo1}
 	STEP0: Initialization: Set $t=0$ and $T_\text{max}$, set $\boldsymbol{\rho}(0)$, $\mathbf{W}(0)$, and $\epsilon$,\\
 	STEP1: Repeat\\
 	STEP2: Set $\boldsymbol{\rho}=\boldsymbol{\rho}(t)$ and approximate the beamforming problem \eqref{1.6.a} by a convex one,\\
 	STEP3: Find  solution of approximated beamforming problem  and
 	transfer it into $\mathbf{W}(t+1)$,\\
 	STEP4: Find $\boldsymbol{\rho}(t+1)$ by solving the joint codebook allocation and user association subproblem
 	with $\mathbf{W}=\mathbf{W}(t+1)$,\\
 	STEP5: If $t=T_\text{max}$ or $\|\mathbf{W}(t)-\mathbf{W}(t-1)\|\le \epsilon,$
 	stop and return $\boldsymbol{\rho}$ and $\mathbf{W}$, \\\,\,\,\,\,else 
 	set $t=t+1$ and go back to STEP 2.\\
 \end{algorithm}
In the followings, we explain the  approximation method and solution algorithms of the beamforming and joint codebook allocation and user association subproblems. 
 
First, to reach more tractable formulation of \eqref{1.5.a}, we deploy the epigraph transformation \cite{Boydconvex}. Therefore, four auxiliary variables $\boldsymbol{\psi 1}=[\psi 1_{b,c,k}] \, b\in\mathcal{B}, c\in\mathcal{C},k\in\mathcal{K} $, $\boldsymbol{\phi 1}=[\phi 1_{b,c,k}] \, b\in\mathcal{B}, c\in\mathcal{C},k\in\mathcal{K} $, $\boldsymbol{\psi 2}=[\psi 2_{b,c,k}] \, b\in\mathcal{B}, c\in\mathcal{C},k\in\mathcal{K} $, $\boldsymbol{\phi 2}=[\phi 2_{b,c,k}] \, b\in\mathcal{B}, c\in\mathcal{C},k\in\mathcal{K} $ are utilized. By exploiting the auxiliary variables, the main optimization problem is transformed into a new form in which the joint codebook allocation and user association subproblem is a linear programming problem with low complexity and beamforming problem is in a good shape to transform into a convex problem by applying the minorization-maximization algorithm (MMA) approach  \cite{1267055}. Consequently, the main optimization problem can be rewritten as follows: 
   \begin{subequations}\label{1.8.a}
 	\begin{align}
 	&\max_{\mathbf{W},\, \boldsymbol{\phi 1},\boldsymbol{\psi 1}, \boldsymbol{\phi 2},\boldsymbol{\psi 2},\boldsymbol{\rho}}\; \sum_{b\in\mathcal{B}}\sum_{c\in\mathcal{C}}\sum_{k\in\mathcal{K}} \log(\psi 1_{b,c,k})\\& \nonumber
 	\text{s.t.}:\hspace{.25cm} \eqref{1.5.ab}, \eqref{1.5.d}-\eqref{1.5.f},\\&\label{1.8.c}
 	\hspace{1cm}\sum_{b\in\mathcal{B}}\sum_{c\in\mathcal{C}}\sum_{k\in\mathcal{K}_v} \log(\psi 1_{b,c,k})\ge R^v_{\text{min} },\,\,\,\forall v\in \mathcal{V},\\&\label{1.8.d}
 	\hspace{1cm}\sum_{n\in\mathcal{N}}  q_{n,c}\rho_{b,c,k}\big(\mid\textbf{w}^T_{b,n,k}\bar{\textbf{h}}_{b,n,k}\mid^2 -\ominus_{b,n,k} \|\textbf{w}_{b,n,k}\|^2\big)\\&\nonumber \hspace{1cm}\ge 	 \psi 1_{b,c,k} \phi 1_{b,c,k}-\phi 1_{b,c,k}\,\, \forall ,
  b\in\mathcal{B}, c\in\mathcal{C},k\in\mathcal{K},\\&\label{1.8.e}
 	\hspace{1cm} \hat{I}^{\text{inter}}_{b,c,k}+\hat{I}^{\text{intra}}_{b,c,k}+ \sigma\le 
 	\phi 1_{b,c,k},\\&\nonumber 	\hspace{1cm}\,\forall b\in\mathcal{B}, c\in\mathcal{C},k\in\mathcal{K},\\&\label{W2.5.b}
 	\hspace{1cm}\sum_{c\in\mathcal{C}}\sum_{k\in\mathcal{K}}\sum_{n\in\mathcal{N}}  \psi 2_{b,c,k}\le R_B^b,\,\,\,\forall b\in \mathcal{B},\\&\label{w2.8.c}
 	\hspace{1cm}\sum_{n\in\mathcal{N}}  q_{n,c}\rho_{b,c,k}\big(\mid\textbf{w}^T_{b,n,k}\bar{\textbf{h}}_{b,n,k}\mid^2 +\ominus_{b,n,k} \|\textbf{w}_{b,n,k}\|^2\big)\\&\nonumber \hspace{1cm}\le 	 \exp{(\psi 1_{b,c,k})} \phi 2_{b,c,k}-\phi 2_{b,c,k},\\&\nonumber \hspace{1cm}\forall ,
 	b\in\mathcal{B}, c\in\mathcal{C},k\in\mathcal{K},\\&\label{w2.8.e}
 	\hspace{1cm} \overline{I}^{\text{inter}}_{b,c,k}+\overline{I}^{\text{intra}}_{b,c,k}+ \sigma\ge 
 	\phi 2_{b,c,k},\\&\nonumber 	\hspace{1cm}\,\forall b\in\mathcal{B}, c\in\mathcal{C},k\in\mathcal{K}.
 	\end{align}
 \end{subequations}
 
 \subsection{Beamforming Subproblem}
 The beamforming problem with fixed  codebook allocation and user association is 
	\begin{align}\label{1.6.a}
	&\max_{\mathbf{W},\, \boldsymbol{\phi 1},\boldsymbol{\psi 1}, \boldsymbol{\phi 2},\boldsymbol{\psi 2}}\; \sum_{b\in\mathcal{B}}\sum_{c\in\mathcal{C}}\sum_{k\in\mathcal{K}} \log(\psi 1_{b,c,k})\\&\nonumber
	\text{s.t.}:\hspace{.25cm} \eqref{1.5.ab}, \eqref{1.8.c}-\eqref{w2.8.e}. 
	\end{align}
From both left side and right side of constraint  \eqref{1.8.d}, right side of constraint \eqref{w2.8.c}, and left side of constraint \eqref{w2.8.e}, optimization problem \eqref{1.6.a} is a non-convex problem.  To solve the non-convexity issue of the beamformng problem the MMA approach is applied which is based on  SCA. Here, we apply the first order Taylor approximation to transform the non-convex constraints  with a convex one.  
The right side of \eqref{1.8.d} can be written as:
\begin{align}\label{lemma01o}
&\psi 1_{b,c,k} \phi 1_{b,c,k}= \\&\nonumber\frac{1}{4} \bigg[ (\psi 1_{b,c,k}+ \phi 1_{b,c,k})^2 - (\psi 1_{b,c,k}-\phi 1_{b,c,k})^2 \bigg].
\end{align}

By applying the MMA algorithm, the function presented in \eqref{lemma01o}  can be approximated by a convex function as follows:
\begin{align}
&\psi 1_{b,c,k} \phi 1_{b,c,k}\simeq	\frac{1}{4} (\psi 1_{b,c,k} +  \phi 1_{b,c,k})^2 -  \frac{1}{4} \bigg[
(\psi 1^t_{b,n,k}\\&\nonumber -  \phi 1^t_{b,c,k})^2 + 2(\psi 1^t_{b,n,k}-  \phi 1^t_{b,c,k})
(\psi 1_{b,c,k}- \psi 1^t_{b,n,k}-\\&\nonumber \phi 1_{b,c,k}+ \phi 1^t_{b,c,k})
\bigg]=G(\psi 1_{b,n,k}, \phi 1_{b,c,k},\psi 1^{t}_{b,n,k}, \phi 1^{t}_{b,c,k}),
\end{align}
where $\phi^t_{b,c,k}$ and $\psi^t_{b,n,k}$ are calculated by the MMA method as
\begin{align}\label{ineq18c1}
&(\psi 1^{t+1}_{b,n,k}, \phi 1^{t+1}_{b,c,k})=\\&\nonumber\max_{(\psi 1_{b,n,k}, \phi 1_{b,c,k})}G(\psi 1_{b,n,k}, \phi 1_{b,c,k},\psi 1^{t}_{b,n,k}, \phi 1^{t}_{b,c,k}).
\end{align}

For the right side of constraint \eqref{w2.8.c}, the same approach deployed for the right side of  \eqref{1.8.d} is exploited. To approximate the left side of \eqref{1.8.d} to a concave function, we rewrite its first term as
$$
\mid\textbf{w}_{b,n,k}^T\hat{\textbf{h}}_{b,n,k}\mid^2=\|\theta_{b,n,k} \|^2= (\theta^{Rl}_{b,n,k})^2+ (\theta^{Im}_{b,n,k})^2,
$$
where $\theta^{Rl}_{b,n,k}=Real(\textbf{w}_{b,n,k}^T\textbf{h}_{b,n,k})$ and $\theta^{Im}_{b,n,k}=Image(\textbf{w}_{b,n,k}^T\textbf{h}_{b,n,k})$.

Then, by applying the fist order Taylor approximation  around $\theta^t_{b,n,k}$, we can replace   $\| \theta_{b,n,k} \|^2$ as
\begin{align}\label{ineq18c}
&\| \theta_{b,n,k} \|^2 \simeq\| \theta^{t}_{b,n,k} \|^2  + 2 (\theta^{t}_{b,n,k})^\text{T} (\theta_{b,n,k} - \theta^{t}_{b,n,k})\\&\nonumber =g(\theta_{b,n,k},\theta^t_{b,n,k}).
\end{align}

Since $\| \theta_{b,n,k} \|^2\ge g(\theta_{b,n,k},\theta^t_{b,n,k})$, the main idea behind the MMA algorithm is to find $\theta^{t}_{b,n,k}$ for the next iteration which maximize $g(\theta_{b,n,k},\theta^t_{b,n,k})$ as follows \cite{7277111}
\begin{align}\label{properties4o}
\theta^{t+1}_{b,n,k} = \max_{\theta_{b,n,k}} g(\theta_{b,n,k},\theta^t_{b,n,k}).
\end{align}

For the left side of \eqref{w2.8.e}, the mentioned steps used for the left side of \eqref{1.8.d} can be accordingly applied. Consequently, by applying the MMA method, an optimization problem with standard convex form is achieved. To solve the optimization problem IPM can be applied. In order to use the interior point method we use the CVX software \cite{11002233}.

%

 \subsection{Joint Codebook Allocation and User Association Subproblem}
With assuming fixed beamforming variables, the joint codebook allocation and user association subproblem is
 \begin{subequations}\label{1.7.a}
 	\begin{align}
 	&\max_{\boldsymbol{\rho}}\; \sum_{b\in\mathcal{B}}\sum_{c\in\mathcal{C}}\sum_{k\in\mathcal{K}} \log(\psi 1_{b,c,k})\\& 
 	\text{s.t.}:\hspace{.25cm} \hspace{.25cm} \eqref{1.5.ab},  \eqref{1.5.d}-\eqref{1.5.f}, \eqref{1.8.c}-\eqref{w2.8.e}. \nonumber
 	\end{align}
 	 \end{subequations}
Note that \eqref{1.7.a} belongs to the integer linear programming. To solve it the available online software such as  MOSEK  can be deployed \cite{11002233}.
 	
\section{Convergence  and Computational Complexity}
In this section, convergence and computational complexity of the proposed solution are investigated. 

\subsection{Validity of the Approximation Method }	
The SCA method with the linear approach is a well known method to approximate the non-convex optimization problem with a convex one \cite{7277111,6678362,7100916,8456624}. The applied approximation method for the small values of channel gains ($\ll1$) 
has acceptable accuracy \cite{8456624}. In the considered system model, due to the fact that the entries of the channel gain matrices have small values the MMA approximation has good accuracy \cite{8456624}.
 	\subsection{Convergence}
{	The convergence of Algorithm 1 is investigated in the following theorem. }
 	\begin{theorem}
 		\label{proposition300}
 		With the iterative approach presented in Algorithm \ref{algo1}, after each iteration,  the objective function increases compared to the previous iteration, and finally converges.
 	\end{theorem}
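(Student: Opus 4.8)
The plan is to prove the two assertions of the theorem in turn: first, that the objective value of \eqref{W1.5.a} produced by Algorithm \ref{algo1} is non-decreasing from one iteration to the next; second, that this monotone sequence is bounded from above, so that it converges. This is the classical convergence argument for an alternate search method whose inner step is a minorize--maximize (MMA/SCA) update, so the real work is in verifying that each piece of Algorithm \ref{algo1} indeed has the required monotonicity property. Throughout, I write $O(t)$ for the value of the objective of \eqref{W1.5.a} evaluated at $(\mathbf{W}(t),\boldsymbol{\rho}(t))$, equivalently $\sum_{b,c,k}\log(\psi 1_{b,c,k})$ at the corresponding feasible point of \eqref{1.8.a}.

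I would begin with STEP 4, the joint codebook and user association update. With $\mathbf{W}=\mathbf{W}(t+1)$ fixed, STEP 4 solves the integer linear program \eqref{1.7.a} to optimality. Since the previous iterate $\boldsymbol{\rho}(t)$, together with the auxiliary variables $\boldsymbol{\psi 1},\boldsymbol{\phi 1},\boldsymbol{\psi 2},\boldsymbol{\phi 2}$ inherited from the beamforming step, is feasible for \eqref{1.7.a}, the optimal value attained at $\boldsymbol{\rho}(t+1)$ is at least the value at $\boldsymbol{\rho}(t)$; hence the common objective does not decrease across STEP 4. The same bookkeeping shows that nothing is lost when the output of STEP 4 is handed back as the warm start of STEP 2, because both subproblems \eqref{1.6.a} and \eqref{1.7.a} maximize the same objective over overlapping subsets of the variables while keeping the remaining ones feasible.

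The core of the argument is STEP 2--STEP 3, the beamforming update. Here I would invoke the standard minorize--maximize inequality. In \eqref{1.6.a} the non-convexity resides in the bilinear/convex terms on the right-hand sides of \eqref{1.8.d} and \eqref{w2.8.c} and in the convex quantities $\|\theta_{b,n,k}\|^2$ on the left-hand sides of \eqref{1.8.d} and \eqref{w2.8.e}. Each of these convex pieces is replaced by its first-order Taylor expansion around the current iterate --- the term $(\psi 1_{b,c,k}-\phi 1_{b,c,k})^2$ by an affine under-estimator as in \eqref{lemma01o}, and $\|\theta_{b,n,k}\|^2$ by $g(\theta_{b,n,k},\theta^t_{b,n,k})$ of \eqref{ineq18c} --- so that (i) the resulting convex surrogate problem has a \emph{smaller} feasible set than \eqref{1.6.a}, and (ii) the expansion point $\mathbf{W}(t)$, together with the associated $\psi 1^t,\phi 1^t,\theta^t$, is feasible for the surrogate because each linearization is exact there, e.g. $g(\theta^t_{b,n,k},\theta^t_{b,n,k})=\|\theta^t_{b,n,k}\|^2$. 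Consequently the surrogate optimum $\mathbf{W}(t+1)$ is feasible for \eqref{1.6.a} and its objective value is no smaller than that of $\mathbf{W}(t)$; writing $\tilde f_t$ for the surrogate objective and $f$ for the true one, this is the chain $f(\mathbf{W}(t+1))\ge \tilde f_t(\mathbf{W}(t+1))\ge \tilde f_t(\mathbf{W}(t))=f(\mathbf{W}(t))$, where \eqref{ineq18c1} and \eqref{properties4o} simply record that the Taylor point for the next round is moved to the new iterate. Combining this with the previous paragraph yields $O(t+1)\ge O(t)$ for all $t$.

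It remains to bound $O(t)$ from above. Each $\hat{r}_{b,c,k}=\log(1+\hat\gamma_{b,c,k})$ has numerator bounded by $\sum_{n}q_{n,c}\rho_{b,c,k}\|\bar{\textbf{h}}_{b,n,k}\|^2\|\textbf{w}_{b,n,k}\|^2$ and denominator at least $\sigma>0$, while \eqref{1.5.ab} caps the total transmit power by $\sum_{b}P^{b}_{\text{max}}$; since the sum has finitely many terms, $O(t)\le O_{\max}<\infty$ uniformly in $t$. A non-decreasing real sequence that is bounded above converges, so $O(t)\to O^{\star}$, which proves the theorem. I expect the main obstacle to be the verification hidden in the beamforming step: one must check rigorously that the particular linearizations in \eqref{lemma01o}--\eqref{properties4o} are global minorants of the associated convex terms and are tight at the current iterate (so feasibility is preserved and the objective cannot drop), and that the auxiliary variables are carried consistently between the two subproblems so that the inequality chain $O(t)\le(\text{after STEP 3})\le(\text{after STEP 4})=O(t+1)$ has no gap. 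As is typical for such ASM/SCA schemes, this argument delivers convergence of the objective value (and, under mild regularity, of the iterates to a stationary point of \eqref{W1.5.a}), but not global optimality.
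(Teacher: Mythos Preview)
Your proposal is correct and follows essentially the same block-coordinate ascent argument as the paper: monotone improvement in each subproblem (the $\boldsymbol{\rho}$-update because the previous iterate is feasible for the integer linear program solved to optimality, and the $\mathbf{W}$-update because the MMA surrogate is a tight minorant at the expansion point), hence a non-decreasing objective sequence. Your version is in fact more complete than the paper's, since you explicitly supply the boundedness argument (via \eqref{1.5.ab} and $\sigma>0$) needed to conclude convergence from monotonicity, and you spell out the minorize--maximize chain $f(\mathbf{W}(t+1))\ge \tilde f_t(\mathbf{W}(t+1))\ge \tilde f_t(\mathbf{W}(t))=f(\mathbf{W}(t))$ that the paper only alludes to.
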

 	
 	\begin{proof}
 		Consider  $U(\mathbf{W},\boldsymbol{\rho})=\sum_{b\in\mathcal{B}}\sum_{c\in\mathcal{C}}\sum_{k\in\mathcal{K}} r_{b,c,k}$, in order to achieve  convergence of the algorithm we need 
 		\begin{align}\label{prty}
 		&\dots \le U(\mathbf{W}^{t},\boldsymbol{\rho}^{t})\stackrel{a}{\le} U(\mathbf{W}^{t},\boldsymbol{\rho}^{t+1})\stackrel{b}{\le}\\\nonumber& U(\mathbf{W}^{t+1},\boldsymbol{\rho}^{t+1})\stackrel{c}{\le} \dots.
 		\end{align}
 		Inequality (a) in  \eqref{prty} comes from the fact that optimization problem with variables $\boldsymbol{\rho}$ and fixed $\mathbf{W}$ ($\mathbf{W}=\mathbf{W}^{t}$) is a linear program which its  solution is equal or better than that of  $\boldsymbol{\rho}=\boldsymbol{\rho}^t$. Consequently, we have  $U(\mathbf{W}^{t},\boldsymbol{\rho}^{t})\stackrel{a}{\le} U(\mathbf{W}^{t},\boldsymbol{\rho}^{t+1})$. For inequalities (b) because the final beamforming optimization problem is a convex problem  the same argument as used for inequality (a) can be exploited \cite{4752799,5771610}. The approximated beamforming problem is a convex optimization problem in which the optimal value of it at each iteration is achieved. Because at each iteration the parameters of approximation is updated based on the results of the previous iteration, the results of the solution  and the value of the objective function in each iteration is improved or stay unchanged in respect to the previous iteration.	
 	\end{proof}
 	\subsection{Computational Complexity}
 As explained in Section \ref{centralized solution algorithm}, to solve \eqref{1.8.a} the ASM approach is applied. 
Complexity of the ASM approach is a linear function of the total number of iterations needed to convergence and  complexity of each subproblem. 
In other words, if $L$ shows the number of iterations, $C\Gamma_1$ indicates the complexity of beamforming subproblem and $C\Gamma_2$  demonstrates  the complexity of joint codebook allocation and user association subproblem,  complexity of the main problem is given by 
$$L(C\Gamma_1+C\Gamma_2).$$
{ In each iteration, in order to solve both beamforming and joint codebook allocation and user association subproblems, CVX toolbox is applied. The CVX toolbox exploits IPM to solve an optimization problem \cite{11002233}. Therefore, computational complexity of  each subproblem is given by \cite{7100916,11002233}}

 \begin{align}
 CO_i=\dfrac{\log(\dfrac{NoC_i}{\nu\underline{\epsilon}})}{\log(\xi)},
 \end{align}
{where $i\in\{1,2\}$ with $i=1$ indicating the beamforming subproblem and $i=2$ indicating the joint codebook allocation and user association subproblem, $NoC_i$ is the number of subproblem $i$ constraints, $ \nu $ is the initial	point to approximate the accuracy of the IPM, $ \underline{\epsilon}$ is the stopping criterion of IPM, and $ \xi $ is used to update the accuracy of the IPM. $NoC_i$ for each subproblem is determined as follows:} 
 $$NoC_1=4\times B \times C\times K+2\times B+V,$$
 $$NoC_2=5\times B \times C\times K+C^2\times B^2+N+2\times B+V.$$

%
%
%
%

\section{numerical results}\label{numerical results}
In this section, numerical results for various system parameters in a downlink SCMA based CRAN system are presented to evaluate the performance of the proposed resource allocation approach. The system parameters are considered as follows. There exist one high power RRH with $500$ m radius and $3$ low power RRHs with $20$ m radius.
The total bandwidth is $10$ MHz. The power spectral density  of the received AWGN noise is also set to $-174$ dBm/Hz. Moreover, the channel gain between each user and RRH has Rayle fading with pathloss exponent $3$. The other parameters are variable  described  in the legend or explanation of each figure. All the considered parameters are summarized in Table \ref{table-5}.

\subsection{ {Comparison of SCMA and Traditional Approaches}}
In order to investigate the performance of a SCMA based C-RAN compared to traditional approaches, we consider an OFDMA based C-RAN as a benchmark. All parameters of these two setups are similar excepts that the number of multiplexed signals over each subcarrier. 
Fig. \ref{Compa} shows system sum rate versus the maximum available power for SCMA and OFDMA based systems with $K=10$. This figure presents a comparison between SCMA and OFDMA for both perfect and uncertain CSI cases. From Fig. \ref{Compa}, with the SCMA technology, the system sum rate is increased up to approximately $65\%$ compared to that of OFDMA. This is because via SCMA, each subcarrier can be used in the coverage area of one RRH more than one time without imposing any interference. Due to the uncertain CSI, the system performance for both SCMA and OFDMA decreases. 

 Fig. \ref{conver} depicts convergence of the proposed solution for SCMA and OFDMA based systems. As can be seen, the proposed algorithm converges after a few iterations for both  SCMA and OFDMA based systems. However, for the SCMA based system due to the more constraints and more complicated formulation it takes more iterations.

\subsection{ {Study of the System Parameters}}
Fig. \ref{MVNO} represents sum rate versus the minimum rate requirement of each slice for both perfect and uncertain CSI cases, $V=3$, and $K_v=4,\forall\,\,\,v\in\{1,2,3\}$. Fig. \ref{MVNO} highlights that with increasing the minimum rate requirement of slices, the sum rate decreases. This is because, by increasing the minimum rate requirement, more resources are consumed to satisfy the minimum rate requirements for each slice and there remains less resource leading to smaller amount of sum rate. 

Fig. \ref{MVNO0} demonstrates sum rate versus the different number of slices with a fixed number of users ($K=10$), and minimum rate requirement $0.5$ (bps/Hz).  From this figure, we can see that with increasing the number of slices, the sum rate decreases. This is because, with increasing the number of slices, due to minimum required rate, the feasibility region of optimization problem shrinks. Consequently, the sum rate decreases.
%
 %
 
 Fig. \ref{User} shows sum rate versus the total number of users. As it is seen, the sum rate increases as the number of users increases, as we expected due tp multiuser diversity gain \cite{Tse}.

\subsection{Study of the Users' Density Distribution and Channel Models}
  In Fig. \ref{edge4}, we study the effects of user distribution on the system performance. Also, we compare the  performance of the proposed solution method to the base line method.
 For the base line approach we suppose that users are associated to BSs based on their distance to BSs. In other words, each user are associated to the nearest BS. As can be seen, for cell-edge users user association method has critical role to improve the system performance and coverage. This is because inter cell interference which affects the users throughput, specially for the cell-edge users, can be controlled with the proposed user association algorithm. While for the baseline method, user association is per-determined. Consequently, by exploits the proposed user association method,  coverage of the considered system is improved. 
 
 In Fig. \ref{channel}, we aim to study the effects of different channel models on the performance of a SCMA based system with the MISO technology considering perfect CSI.  This figure depicts sum rate of SCMA for different channel models namely, Rician and Rayleigh. In this figure, the maximum available power is $25$ Watts and $K=10$ with perfect CSI. As can be seen the Rician channel model gives more sum rate than that of the Rayleigh fading model. This is due to that, with the Rician model a line of sight signal path is always exist during the data transmission. Moreover, exploiting multiple antenna at receiver in Rayleigh model gives more gain than that of the Rician model 
 as expected from \cite{Anderegold}.

\begin{table}
	\centering
	\caption{Considered parameters in numerical results}
	\label{table-5}
	\begin{tabular}{ |l|c|}
		\hline
		
		Parameter  & Value of each parameter  \\

		\hline
		
		$M_T$  &
		$1$,	$3$ and $6$\\
		\hline
		
		$K_T$   &
		$6$\\
		\hline
		
		$U$   &
		$2$\\
		\hline
		
		 Number of subcarriers  &
		$8$, $16$\\
		\hline
%
		
     	Pathloss exponent  &
		$-3$\\
		\hline

        Power spectral density  of the received &\\ AWGN noise  &
         $-174$ dBm/Hz\\
         \hline
         
          Maximum transmit power of high power &\\RRH  &
         $40$ Watts\\
         \hline
         
         Maximum transmit power of each low &\\power RRH  &
         $3$ Watts \\
         \hline
          
         Number of low power RRH  &
         $3$\\
         \hline

        High power RRH radius  &
         $500$ m\\
         \hline
            
        Low power RRH radius &
         $20$ m\\
         \hline
         $\kappa$ &
         $0$, $0.05$, $0.1$\\
         \hline
         $R_B^1$ &
         $20$ bps/Hz\\
         \hline
         $R_B^b,\,\,\,\forall b\in \mathcal{B}/{1}$ &
         $5$ bps/Hz\\
         \hline
	\end{tabular}
\end{table}

\begin{figure}
	\centering
	\includegraphics[width=.53\textwidth]{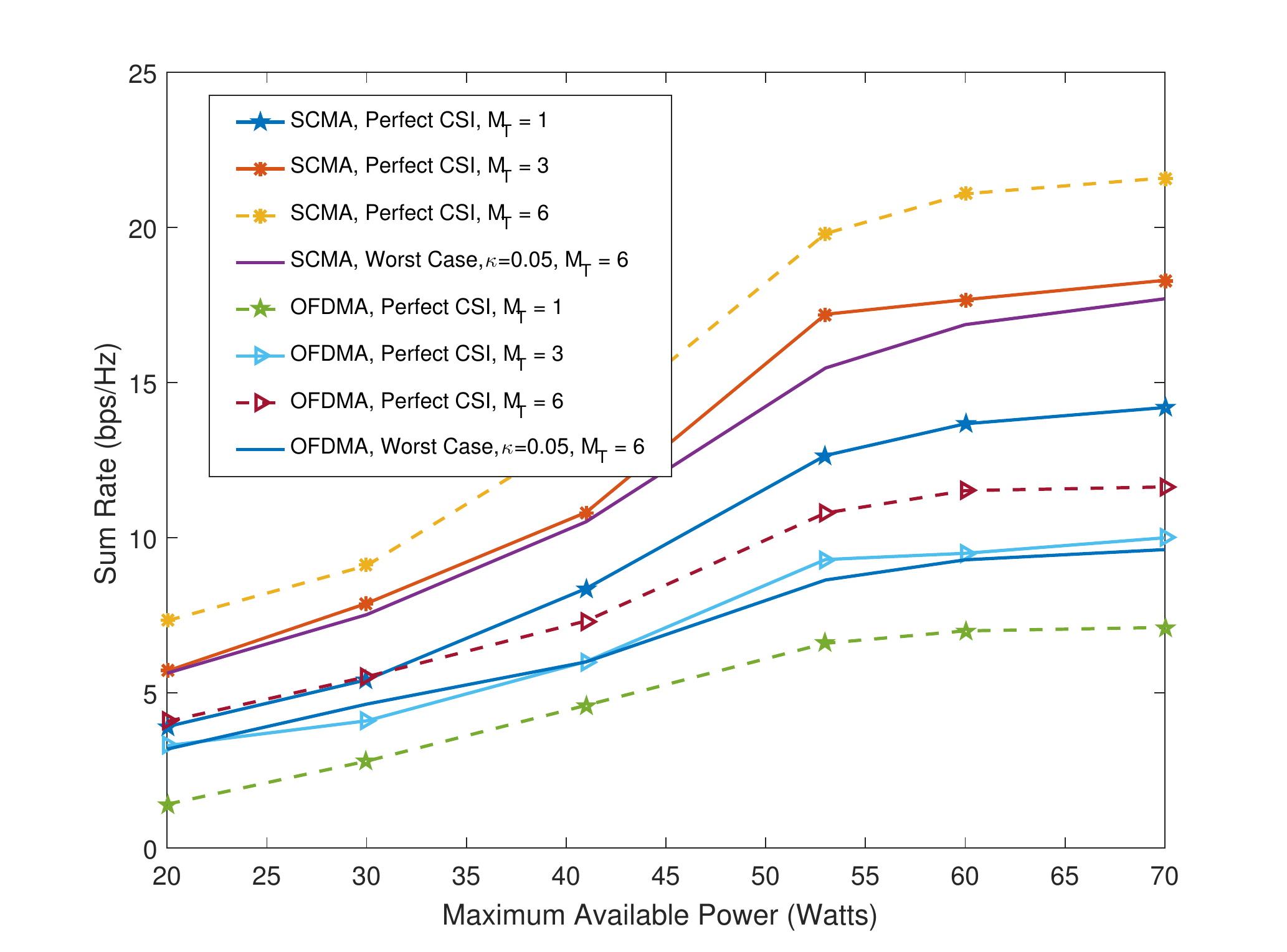}
	\caption{Sum rate versus the maximum available power for SCMA and OFDMA based system.}
	\label{Compa}
\end{figure}

\begin{figure}
	\centering
	\includegraphics[width=.52\textwidth]{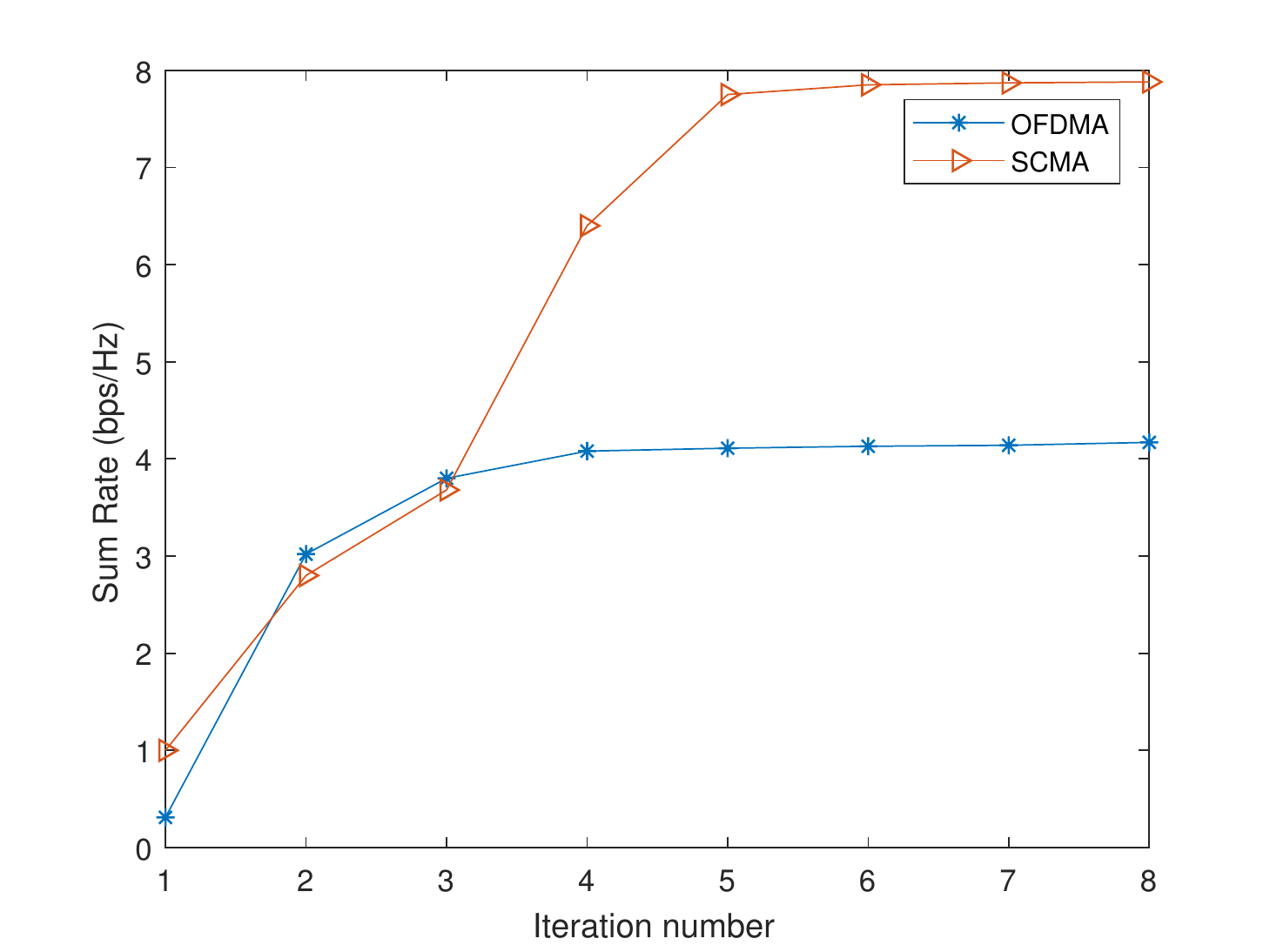}
	\caption{Convergence of the proposed solution for SCMA and OFDMA based systems. }
	\label{conver}
\end{figure}

\begin{figure}
	\centering
	\includegraphics[width=.55\textwidth]{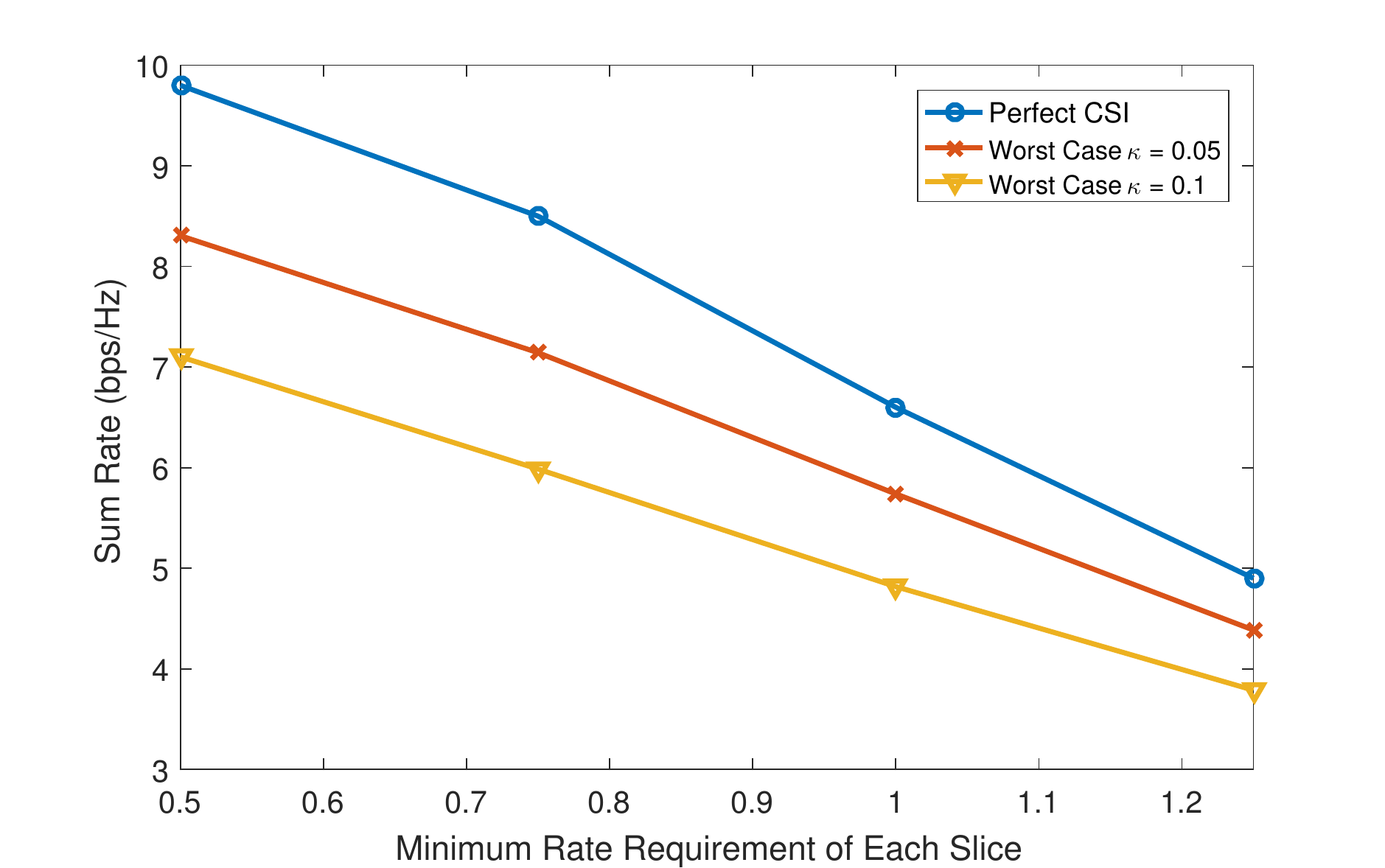}
	\caption{Sum rate versus the minimum rate requirement of each slice.}
	\label{MVNO}
\end{figure}

\begin{figure}
	\centering
	\includegraphics[width=.53\textwidth]{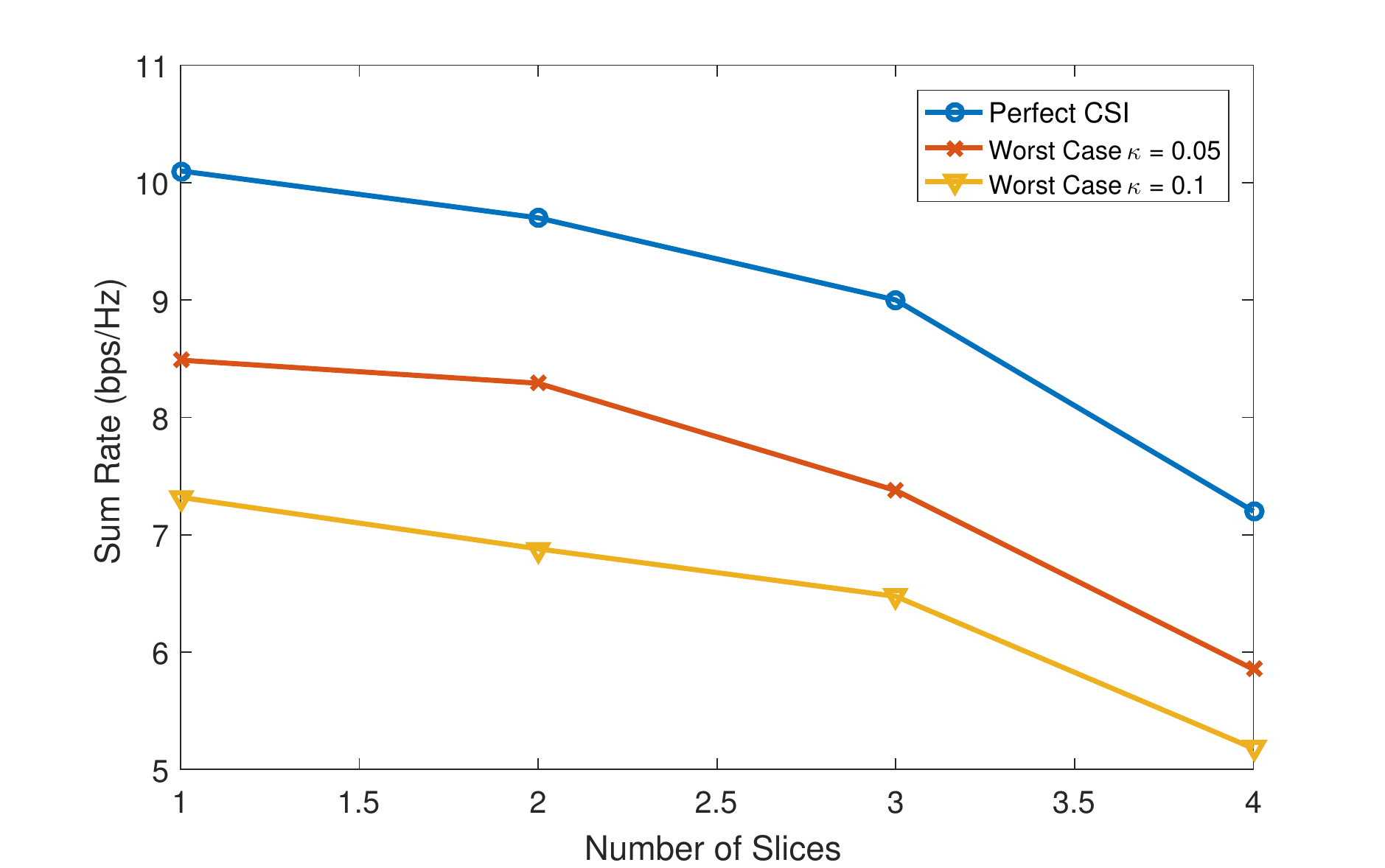}
	\caption{Sum rate versus different number of slices for a fixed number of users.}
	\label{MVNO0}
\end{figure}

\begin{figure}
	\centering
	\includegraphics[width=.52\textwidth]{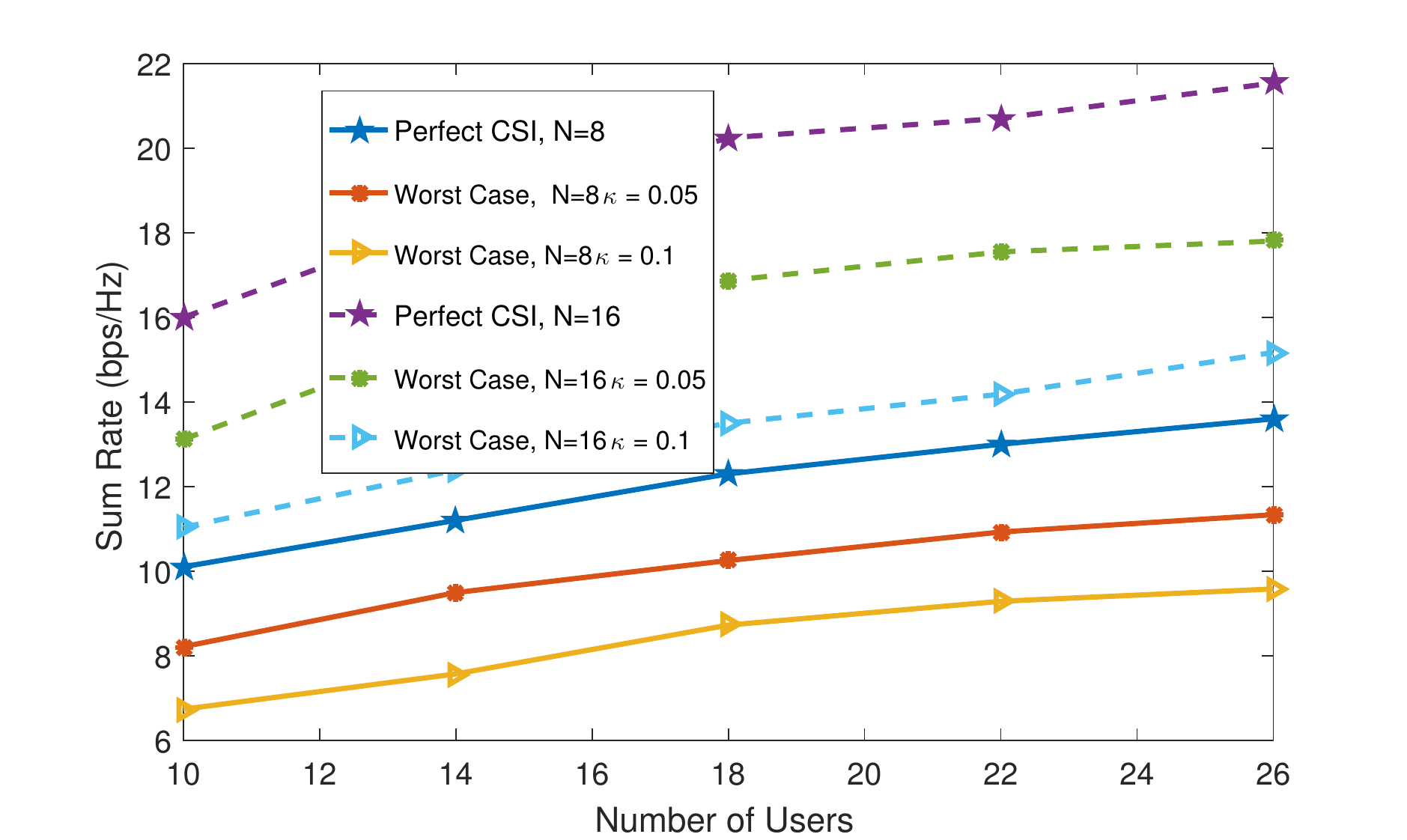}
	\caption{Sum rate versus the total number of users.}
	\label{User}
\end{figure}

\begin{figure}
	\centering
	\includegraphics[width=.54\textwidth]{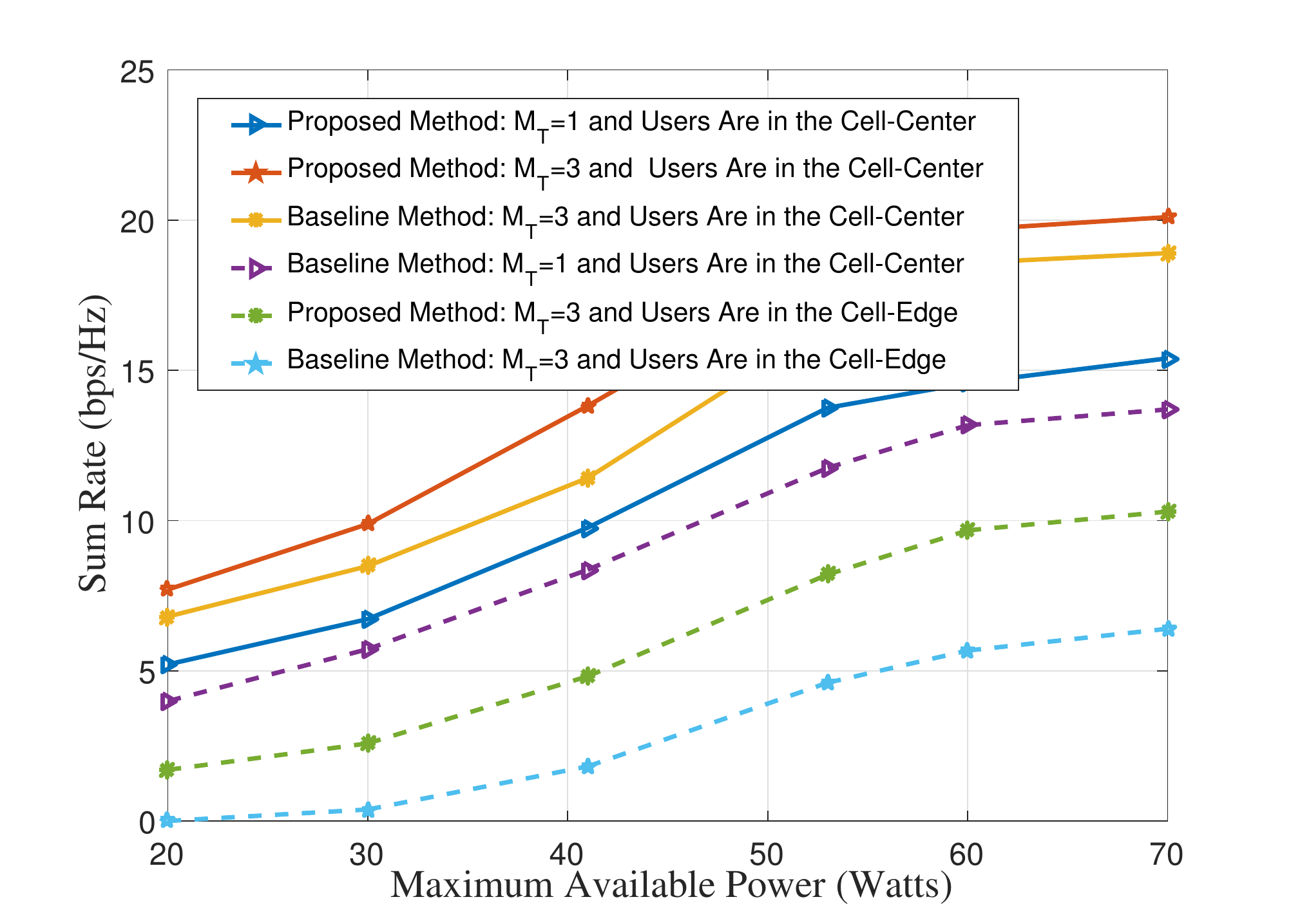}
	\caption{Sum rate versus different user association methods with perfect CSI.}
	\label{edge4}
\end{figure}

\begin{figure}
	\centering
	\includegraphics[width=.47\textwidth]{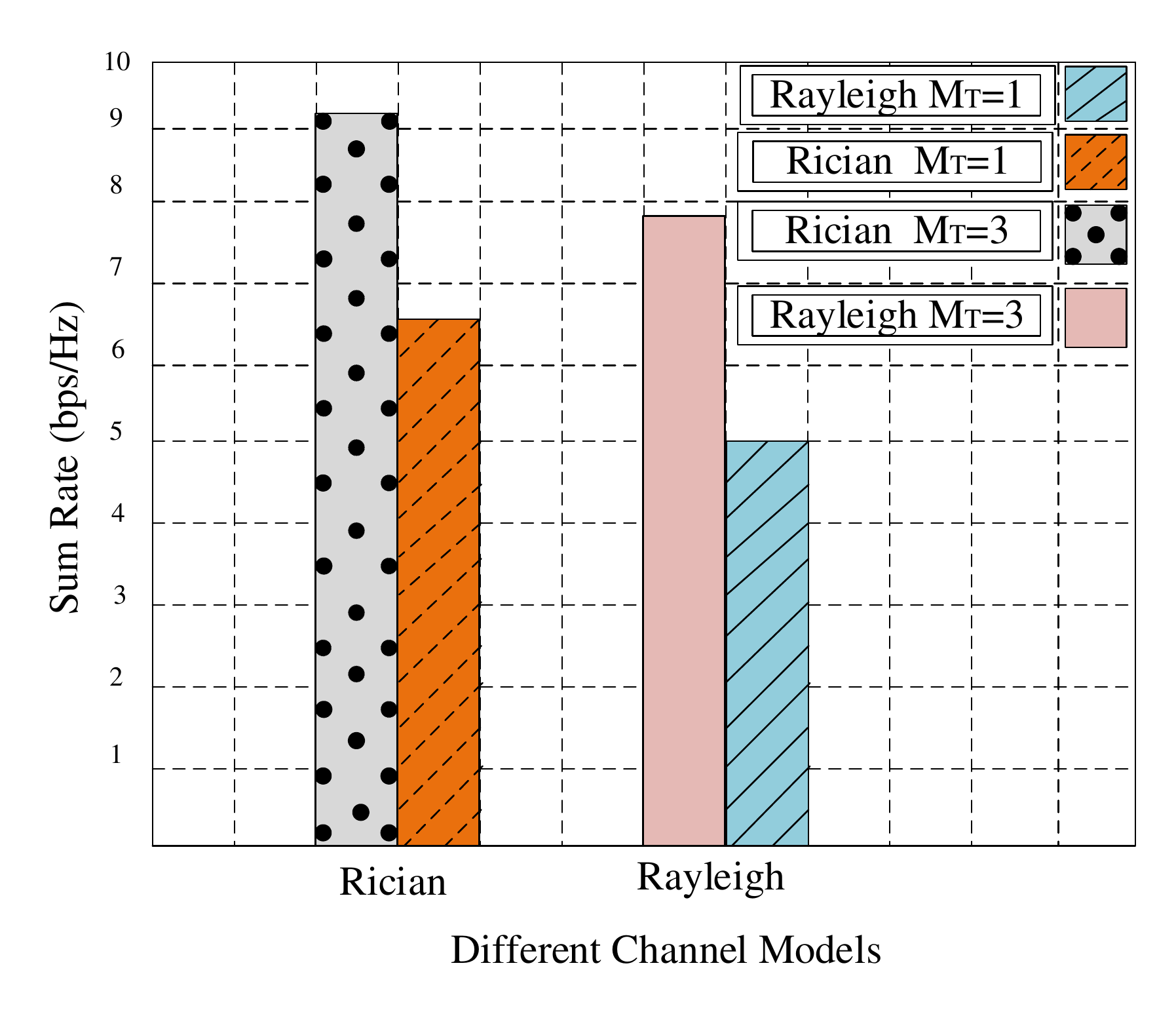}
	\caption{{Sum rate versus the different channel models.}}
	\label{channel}
\end{figure}

 {From the above results, we can conclude that there is a trade off between performance and robustness which depends on the value of $ \kappa $. Consequently, this parameter has an essential role in the system performance.  $ \kappa $ can be determined based on the level of noise in the channel estimation process and probability distribution function of error \cite{6138858,Saeedeh,5164911}.}

 \section{conclusion}\label{conclusion}
 In this paper, considering uncertain CSI, we proposed a worst-case radio resource allocation in a SCMA based C-RAN with MISO transmission. To solve the proposed optimization problem,  we exploited the worst case approach and rewrote the problem via the concept of the protection function to reach more tractable formulation. Then, we applied the ASM method that in each iteration beamforming   and joint codebook allocation and user association subproblems are solved separately and the algorithm is continued until convergence is achieved. 
 Numerical results reveal that the proposed optimization problem via SCMA and MISO technologies improves performance of the system significantly even for the uncertain CSI. This results confess the potential of SCMA to be a perfect candidate for the access technique in 5G. 
 
 \hyphenation{op-tical net-works semi-conduc-tor}
 \bibliographystyle{IEEEtran}
 \bibliography{IEEEabrv,Bibliography}

\end{document}